\newcommand{\ssigma}{\boldsymbol{\sigma}}
\newcommand{\rrho}{\hat{\boldsymbol{\rho}}}
\newcommand{\Tr}{\mathrm{Tr}}
\newcommand{\Id}{\mathds{I}}
\newcommand{\SL}{\mathrm{SL}}
\newcommand{\diag}{\mathrm{diag}}
\newcommand{\reals}{\mathds{R}}
\newcommand{\complex}{\mathds{C}}
\theoremstyle{plain}
\newtheorem{thm}[subsubsection]{Theorem}
\theoremstyle{definition}
\newtheorem{defn}[subsubsection]{Definition}
\newtheorem{propos}[subsubsection]{Proposition}
\begin{document}
\title{Nonclassical steering with two-mode Gaussian states}
\author{Massimo Frigerio}
\email{Electronic address: m.frigerio49@campus.unimib.it}
\affiliation{Dipartimento di Fisica dell'Universit\`a degli Studi di Milano-Bicocca, 
I-20126 Milano, Italy} 
\author{Claudio Destri}
\email{Electronic address: claudio.destri@unimib.it}
\affiliation{Dipartimento di Fisica dell'Universit\`a degli Studi di Milano-Bicocca, 
I-20126 Milano, Italy}  
\author{Stefano Olivares}
\email{Electronic address: stefano.olivares@fisica.unimi.it}
\affiliation{Dipartimento di Fisica ``Aldo Pontremoli''  dell'Universit\`a degli 
Studi di Milano, I-20133 Milano, Italy}
\author{Matteo G. A. Paris}
\email{Electronic address: matteo.paris@fisica.unimi.it}
\affiliation{Dipartimento di Fisica  ``Aldo Pontremoli'' dell'Universit\`a degli 
Studi di Milano, I-20133 Milano, Italy}
\date{\today}
\begin{abstract}
Singularity or negativity of Glauber P-function is a widespread notion of nonclassicality, with
important implications in quantum optics and with the character of an irreducible resource. Here
we explore how P-nonclassicality may be generated by conditional Gaussian measurements on bipartite
Gaussian states. This \emph{nonclassical steering} may occur in a weak form, which does not imply
entanglement, and in a strong form that implies EPR-steerability and thus entanglement. We show
that field quadratures are the best measurements to remotely generate nonclassicality, and exploit
this result to derive necessary and sufficient conditions for weak and strong nonclassical steering.
For two-mode squeezed thermal states (TMST), weak and strong nonclassical steering coincide, and
merge with the notion of EPR steering. This also provides a new operational interpretation for
P-function nonclassicality as the distinctive feature that allows one-party entanglement verification
on TMSTs.
\end{abstract}
\maketitle
The classification of quantum correlations is a very active front of research 
since the early days of quantum mechanics. In this Letter, we investigate 
\emph{quantum steering}, a class of asymmetric quantum correlations stronger than entanglement \cite{wern89}, but weaker  than violation of Bell's inequality \cite{UCNG20,quint15}, that was introduced in relation to the EPR argument \cite{schr35,EPR}, to indicate the possibility of one party
to collapse (or \emph{steer}) the wavefunction of the other party into 
different quantum states by means of suitable measurements. Despite this 
early appearance, steering received firm mathematical bases only recently \cite{wise07,jone07}, and we refer 
to this definition as \emph{EPR steering}, particularly in the context of continuous-variable (CV) systems \cite{XXMTHA17}. The central idea of EPR steering is 
to use the influence of the measurements performed by one party (say Alice) to convince the other party (say Bob) that the shared state was entangled: 
if the initial correlated state allows for such a task, it is called \emph{EPR-steerable} by Alice. Steering is now widely considered a fundamental resource for quantum communication tasks \cite{bran12,heAd15,gom15,woll16,xia17,deng17} and many criteria for its detection have been explored \cite{kog07,schn13,cwlee13,ji16}.
\par
Independently of quantum correlations, a variety of other
concepts of nonclassicality have been put forward \cite{ferrpar12}. For 
CV quantum systems, the nonclassicality of a quantum state $\rrho$ 
is often characterized using the singularity of its Glauber P-function \cite{cahillglauber69,glauber69,lee95},
i.e. its expansion  onto coherent states $\vert \alpha \rangle$ 
($\alpha \in \complex$) according to:
 $\rrho  =  \int_{\complex} \mathrm{d}^{2}  
\alpha  P \left[ \rrho \right] \left( \alpha \right)  \vert \alpha 
\rangle \langle \alpha \vert$. 
The main reason for the wide use of the P-function is that it leads to the most \emph{physically inspired} notion of nonclassicality. It has direct empirical consequences, for example in quantum optics, where it is known to be necessary for antibunching and sub-Poissonian photon statistics \cite{mand86}. Viceversa, 
classicality  according to the P-function implies the empirical adequacy 
of Maxwell's Equations for the phenomenological description of the 
corresponding state of light. Moreover, P-nonclassical states
are usually harder to fabricate \cite{brau05,alb16}, 
thereby giving a \emph{resource} character to this type of nonclassicality \cite{yadbind18,kwtanvolk19}. In this paper, we investigate the possibility of 
\emph{steering} nonclassicality with two-mode Gaussian states, i.e. 
manipulating and generating it remotely \cite{alb17}, and 
introduce the concepts of {\em weak} and {\em strong} nonclassical 
steering for bipartite Gaussian states.
\par
As a first step, let us briefly review the definition of P-nonclassicality. 
The P-function is a member of a continuous family of phase
space quasiprobability distributions, known as $s$-ordered Wigner functions 
and defined according to \cite{cahillglauber69}:
\begin{equation}
\label{eq:wignerS}
    W_{s} \left[ \rrho  \right] \left( X \right) \ = \   \int_{\reals^{n}} \frac{d^{2n} \Lambda}{(2\pi^2)^{n}} \ \mathrm{e}^{ \frac{1}{4} s \vert \Lambda \vert^{2}  +  i \Lambda^{T} \boldsymbol{\Omega} X } \ \chi \left[ \rrho \right] \left( \Lambda \right)
\end{equation}
for $s \in [ - 1 , 1]$. Here the characteristic function \cite{FOPGauss} is defined as $\chi \left[ \rrho \right] ( \Lambda) = \Tr [  \rrho\, \mathrm{e}^{i \Lambda^{T} \hat{\mathbf{R}}} ]$, where $\hat{\mathbf{R}} = (\hat{x}_{1}, \hat{p}_{1},..., \hat{x}_{n}, \hat{p}_{n})^{T}$ is the vector of the canonical operators (or quadrature operators), related to the mode operators by $\hat{x}_j = (\hat{a}_j + \hat{a}^\dagger_j)/\sqrt{2}$, $\hat{p}_j = -i(\hat{a}_j - \hat{a}_j^{\dagger})/\sqrt{2}$. The case $s= 1$ corresponds precisely to the Glauber P-function, which is therefore the most singular of the family and can behave even more singularly than a tempered distribution. When the P-function of a CV quantum state $\rrho$ is not positive 
semidefinite \cite{daman18} and/or it is more singular than a delta
 distribution, the state 
is termed \emph{nonclassical} \cite{mand86,lee91,lutk95}. The so-called \emph{nonclassical depth} of a CV state $\rrho$ is then the quantity $\mathfrak{t} = \frac12 (1- s_{m})  $, where $s_{m}$ is the largest real number such that $W_{s} \left[ \rrho \right] (X)$ is nonsingular $\forall s < s_{m}$. Thus $\rrho$ is nonclassical if $\mathfrak{t} > 0$ and classical if $\mathfrak{t}=0$.
\par 
Let us now consider a Gaussian state $\rrho_{AB}$ of mode $A$ 
controlled by Alice, and mode $B$ controlled by Bob. We write its 
characteristic function as \cite{FOPGauss,olivares}:
\begin{equation}
\label{eq:wignergauss}
\chi  \left[ \rrho_{AB} \right] \left( \Lambda \right) \ = \ \exp \left\{  - \frac{1}{2} \Lambda^{T} \boldsymbol{\sigma} \Lambda \  - \ i \Lambda^{T}   \langle \hat{\mathbf{R}} \rangle   \right\}
\end{equation}
where the covariance matrix (CM) reads $\ssigma_{jk} =  \frac{1}{2} \langle  \{ \hat{R}_{j} , \hat{R}_{k} \} \rangle - \langle \hat{R}_{j} \rangle \langle \hat{R}_{k} \rangle $, with $\langle \hat{\mathbf{R}} \rangle = \Tr_{AB} [ \rrho_{AB} \hat{\mathbf{R}} ]$. The uncertainty relations (UR) may be recast into a constraint 
on the CM associated with physical states \cite{sera07}, i.e.
$\ssigma + i\,  \boldsymbol{\Omega}/2 \geq 0$, 
where $\boldsymbol{\Omega} = \oplus_{j=1}^{n} \boldsymbol{\omega}$ (for $n$ modes) and $\boldsymbol{\omega} =  i \sigma_{y}$ is the standard symplectic form \footnote{$\sigma_{y}$ being the second Pauli matrix.}.
\par
Since $ \chi [ \rrho ] ( \Lambda )$ is a Gaussian function
on phase space whenever $\rrho$ is a Gaussian state, it is straightforward to conclude from Eq.(\ref{eq:wignergauss}) 
and Eq.(\ref{eq:wignerS}) that $\rrho$ is nonclassical if and only if the least eigenvalue of $\ssigma$ is smaller than $\frac{1}{2}$. Examples of classical 
Gaussian states are coherent and thermal states, while squeezed vacuum states 
are always nonclassical. In the following, we will be interested in characterizing 
how quantum correlations in the joint Gaussian quantum state $\rrho_{AB}$ may be exploited to influence the nonclassicality of one mode (say $A$) by Gaussian measurements on the other one (mode $B$). In doing so, Local Gaussian Unitary Transformations (LGUTs) do not affect these correlations, and therefore we 
may freely perform LGUTs on the two modes to bring $\rrho_{AB}$ into a 
simpler form. In particular, by means of LGUTs a two-mode Gaussian state 
can always be brought into the so-called \emph{canonical form} \cite{sera04, DGCZ00, olivares}, for which 
the CM $\ssigma$ can be decomposed in $2 \times 2$ diagonal blocks $\ssigma \ = \ \left(  \begin{array}{cc} \mathbf{A} & \mathbf{C} \\ \mathbf{C}^{T} &  \mathbf{B}      \end{array}  \right)$
with $\mathbf{A} = a \cdot \Id_{2}$, $\mathbf{B} = b \cdot \Id_{2}$ and $\mathbf{C} = \diag ( c_{1}, c_{2} )$, while $a,b,c_{1}, c_{2} \in \reals$. We now note that the \emph{unconditional state} of mode $A$, defined either as the state that Alice uses to describe her mode without knowing anything about Bob's mode \emph{or} as the state she assigns to her mode by assuming that Bob has performed some measurement on his mode without letting her know the outcome, is given by $\rrho_{A} = \Tr_{B} \left[ \rrho_{AB} \right]$ and has a CM $\ssigma_{A} = \mathrm{A}$. Since 
the UR imply that $a \geq \frac{1}{2}$ this means that 
$\rrho_{A}$ must be classical. The same holds true for mode $B$, thus we may 
say that given a two-mode Gaussian state $\rrho_{AB}$ in canonical form, 
neither of the two modes has any intrinsic nonclassicality. 
Based on this observation, we advance the following definition:
\begin{defn}
\label{def:wnscanon}
A two-mode Gaussian state $\rrho_{AB}$ in canonical form is called \emph{weakly nonclassically steerable} (WNS) from mode $B$ to mode $A$ ($B \rightarrow A$) 
if there exists a Gaussian positive operator-valued measure (POVM) $ \{ \hat{\boldsymbol{\Pi}_{\alpha} } \}_{\alpha \in \complex }$ on mode $B$ 
such that the \emph{conditional state of mode $A$} after such measurement and communication of the outcome $\alpha$:
\begin{equation}
    \rrho_{c, \alpha } \ = \ \frac{1}{p_{\alpha}} \Tr_{B} \left[ \rrho_{AB} \left( \Id_{A} \otimes \hat{\boldsymbol{\Pi}}_{\alpha} \right) \right]
\end{equation}
is \emph{nonclassical}, where $p_{\alpha} = \Tr_{AB} [ \rrho_{AB} ( \Id_{A} \otimes \hat{\boldsymbol{\Pi}}_{\alpha} ) ]$ is the probability of observing the outcome $\alpha \in \complex$. 
\end{defn}
Let us now deduce a simple criterion to discern weakly nonclassically steerable states, starting with the following proposition:
\begin{propos}
The least classical (i.e. with highest possible nonclassical depth) conditional 
state $\rrho_{c, \alpha}$ of mode $A$ attainable with Gaussian measurements on 
mode $B$ of a two-mode Gaussian state $\rrho_{AB}$ in canonical form is reached
by quadrature detection on mode $B$, either of the $\hat{x}_{B}$ quadrature if 
$\vert c_{2} \vert \geq \vert c_{1} \vert$, or of the $\hat{p}_{B}$ quadrature 
otherwise. 
\end{propos}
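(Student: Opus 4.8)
The plan is to reduce the maximization over all Gaussian measurements on mode $B$ to a finite-dimensional optimization over seed covariance matrices, and then to solve that optimization by a monotonicity-and-alignment argument. The first step is a reformulation: by the Gaussian nonclassicality criterion recalled above, the nonclassical depth of a Gaussian state depends on its CM only through the least eigenvalue. Indeed, for a Gaussian $\rrho$ with CM $\ssigma$ the $s$-ordered quasiprobability $W_s[\rrho]$ is again (formally) a Gaussian, with covariance matrix $\ssigma-\tfrac{s}{2}\Id$, hence a genuine nonsingular function precisely when $\ssigma-\tfrac{s}{2}\Id>0$, i.e. when $s<2\,\lambda_{\min}(\ssigma)$; thus $\mathfrak t=\max\{0,\tfrac12-\lambda_{\min}(\ssigma)\}$, and a \emph{least classical} conditional state is one whose CM has the smallest achievable least eigenvalue. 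So the task becomes: minimize $\lambda_{\min}(\ssigma_c)$ over all Gaussian POVMs on $B$, with $\ssigma_c$ the conditional CM of $A$.

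Next I would invoke the structure of single-mode Gaussian measurements. A Gaussian POVM on mode $B$ is a general-dyne measurement, determined by a seed covariance matrix $\gamma_m$ obeying the uncertainty relation $\gamma_m+\tfrac{i}{2}\boldsymbol\omega\geq0$; for a Gaussian $\rrho_{AB}$ it leaves mode $A$ in a Gaussian state whose CM is the \emph{outcome-independent} Schur complement $\ssigma_c(\gamma_m)=\mathbf A-\mathbf C\,(\mathbf B+\gamma_m)^{-1}\,\mathbf C^{T}$, while homodyne of $\hat x_B$ and of $\hat p_B$ arise as the limiting seeds $\gamma_m\to\diag(0,\infty)$ and $\gamma_m\to\diag(\infty,0)$. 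The map $\gamma_m\mapsto\ssigma_c(\gamma_m)$ is operator-monotone — inversion is operator-antitone on positive matrices, the congruence $Z\mapsto\mathbf C Z\mathbf C^{T}$ is monotone, and $W\mapsto\mathbf A-W$ is antitone — so $\gamma_m'\leq\gamma_m$ implies $\lambda_{\min}(\ssigma_c(\gamma_m'))\leq\lambda_{\min}(\ssigma_c(\gamma_m))$. Since Williamson's theorem gives $\gamma_m=S(\nu\Id_2)S^{T}\geq\tfrac12 SS^{T}$ with $\nu\geq\tfrac12$ and $S\in\Sp(2,\reals)$, and $\tfrac12 SS^{T}$ is the CM of a pure (rotated, squeezed) vacuum, the minimization may be restricted to pure seeds $\gamma_m=\tfrac12 M$ with $M=M^{T}>0$, $\det M=1$, together with their homodyne limits.

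It remains to solve this reduced problem, and here the diagonality of $\mathbf C$ — available because $\rrho_{AB}$ is in canonical form — is decisive. Using $\mathbf A=a\,\Id_2$, one has $\lambda_{\min}(\ssigma_c)=a-\lambda_{\max}\!\big(\mathbf C(b\,\Id_2+\gamma_m)^{-1}\mathbf C^{T}\big)=a-\max_{\|v\|=1}(\mathbf C v)^{T}(b\,\Id_2+\gamma_m)^{-1}(\mathbf C v)$, and I would interchange the maximizations over the unit vector $v$ and over $\gamma_m$. For fixed $w=\mathbf C v$ one has $w^{T}(b\,\Id_2+\gamma_m)^{-1}w\leq\|w\|^{2}/\big(b+\lambda_{\min}(\gamma_m)\big)\leq\|w\|^{2}/b$, and the last bound is saturated in the limit where the pure seed becomes an infinitely squeezed state whose vanishing-variance quadrature points along $w$ — that is, by homodyning the $B$-quadrature in the direction of $w$. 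What is left is $\max_{\|v\|=1}\|\mathbf C v\|^{2}=\max_{\|v\|=1}(c_1^{2}v_1^{2}+c_2^{2}v_2^{2})=\max(c_1^{2},c_2^{2})$, attained at a coordinate axis $v=e_1$ or $v=e_2$; for such $v$ the vector $w=\mathbf C v$ is itself axis-aligned, so the optimal measurement is homodyne of $\hat x_B$ or of $\hat p_B$, and comparing the two resulting conditional CMs $\diag(a-c_1^{2}/b,\,a)$ and $\diag(a,\,a-c_2^{2}/b)$ yields the stated dichotomy between detecting $\hat x_B$ and detecting $\hat p_B$. Because this optimum is attained exactly in the homodyne limit — and not merely approached within the general-dyne family — the least classical conditional state is genuinely \emph{reached} by quadrature detection, as claimed.

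The step I expect to be the main obstacle is the passage from arbitrary Gaussian POVMs to a single coordinate-axis homodyne. One must justify the general-dyne normal form together with the Schur-complement formula and verify that classical post-processing (or randomization) of the measurement cannot lower $\lambda_{\min}(\ssigma_c)$, and then carry out the pure-seed reduction and the limiting argument with care, since the family of pure seeds is non-compact and the infimum of $\lambda_{\min}(\ssigma_c)$ sits on its homodyne boundary. It is precisely the diagonality of $\mathbf C$, a by-product of the canonical form, that pins that boundary optimum to an axis-aligned quadrature rather than to a rotated one.
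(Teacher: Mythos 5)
Your proof is correct and reaches the same optimum, $\lambda_{\min}(\ssigma_c)=a-\max(c_1^2,c_2^2)/b$ in the homodyne limit, but by a genuinely different route. The paper proceeds computationally: it fixes the explicit parametrization $(\mu,\mu_s,\phi)$ of the seed CM in Eq.~(\ref{eq:CMmeas}), computes the largest eigenvalue of $\mathbf{C}^{T}(\mathbf{B}+\ssigma_M)^{-1}\mathbf{C}$ in closed form, optimizes the phase ($\phi=0$ or $\pi$), and then checks by inspection that the result is monotone in $\mu_s$, so that the supremum sits at the boundary $\mu_s\to 0$ with $\mu$ dropping out. You instead replace the derivative check with structural facts: the variational characterization $\lambda_{\max}=\max_{\|v\|=1}(\mathbf{C}v)^{T}(b\Id_2+\gamma_m)^{-1}(\mathbf{C}v)$, the always-valid interchange of the two suprema, the bound $w^{T}(b\Id_2+\gamma_m)^{-1}w\le\|w\|^2/(b+\lambda_{\min}(\gamma_m))\le\|w\|^2/b$ saturated by an infinitely squeezed seed aligned with $w$, and finally $\max_{\|v\|=1}\|\mathbf{C}v\|^2=\max(c_1^2,c_2^2)$ from the diagonality of $\mathbf{C}$. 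This buys coordinate-free transparency (no need to trust a sign computation of a derivative, and the operator-monotonicity/Williamson reduction to pure seeds makes clear why impurity $\mu$ can never help), at the cost of a slightly longer setup; the paper's version buys the explicit formulas it reuses immediately in Propositions~\ref{propos:WNScanon} and \ref{propos:SNScanon} and in the triangoloid plots. Your remark that the supremum lies on the non-compact homodyne boundary and is only attained there as a limit is a point the paper passes over quickly, and is worth keeping.

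One small caveat: be careful with the final labeling of which axis is $\hat x_B$ and which is $\hat p_B$. Your explicit limits give $\diag(a-c_1^{2}/b,\,a)$ for a seed with vanishing variance along the first axis and $\diag(a,\,a-c_2^{2}/b)$ for the second, so under the standard identification (homodyne of $\hat x_B$ uses a seed with vanishing $x$-variance) the optimal choice for $|c_2|\ge|c_1|$ is the \emph{second}-axis quadrature; the paper's $\phi=0$ seed in Eq.~(\ref{eq:CMmeas}) likewise has vanishing variance in its $(2,2)$ entry in the limit $\mu_s\to0$, yet is called the $\hat x$ measurement. The substance of the dichotomy (measure the quadrature conjugate to the larger $|c_i|$-correlation, obtaining $a-c^2/b$ with $c=\max\{|c_1|,|c_2|\}$) is what matters and is correct in your argument; just state explicitly which convention ties the seed's squeezed direction to the quadrature label rather than asserting agreement with the proposition's wording.
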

\begin{proof}
Let us denote by $\ssigma_{c}$ the CM of the conditional state: one can show 
that it does not depend on the outcome $\alpha$, but just on the CM of the 
POVM performed on $B$. Therefore, $\rrho_{AB}$ in canonical form is WNS if 
and only if there exists a Gaussian POVM such that the least eigenvalue of 
$\ssigma_{c}$ is smaller than $\frac{1}{2}$. The effects of the most general 
Gaussian POVM on a single mode may be written as $\hat{\boldsymbol{\Pi}}_{\alpha} = D(\alpha) \rrho_{G}D^{\dagger}( \alpha)/\pi$ where $D(\alpha) = \exp \{ \alpha \hat{a} - \alpha^{*} \hat{a}^{\dagger} \}$ is the displacement operator and $\rrho_{G}$ 
is a single-mode Gaussian state with
$\langle \hat{\mathbf{R}} \rangle = 0$.
Furthermore, we may choose the following convenient parametrization for the CM $\ssigma_{M}$ of $\rrho_{G}$:
\begin{equation} 
\label{eq:CMmeas} \ssigma_{M} \ = \ \frac{1}{2  \mu \mu_{s} } \left( 
\begin{array}{cc} 1 + \kappa_s \cos \phi & - \kappa_s\sin \phi \\ - \kappa_s \sin \phi &  1 -\kappa_s \cos \phi \end{array} \right)
\end{equation}
where $\mu = \Tr[\rrho_{G}^2] \in [0, 1]$ is the purity of $\rrho_{G}$, $\mu_{s} = [1 + 2 \sinh^{2} r_{m} ]^{-1}$, $\kappa_s =\sqrt{1 - {\mu_{s}}^{2}}$, $r_{m}$ being the squeezing parameter of the state,
and $\phi \in [ 0, 2 \pi )$ is a phase. According to a well-known result \cite{olivares,ESP02,GC02}, the conditional CM is then given by the Schur complement \cite{matrx} of $\ssigma$ with respect to $( \mathbf{B} + \ssigma_{M})$, explicitly $ \ssigma_{c} = \mathbf{A} - \mathbf{C}^{T} \left( \mathbf{B} + \ssigma_{M} \right)^{-1} \mathbf{C} $.
Since $\mathbf{A}$ is diagonal, the minimum $\lambda_{m}$ 
(over all possible CMs $\ssigma_{M}$) 
of the smallest eigenvalue of $\ssigma_{c}$ is attained for the supremum of the 
greatest eigenvalue of $\mathbf{C}^{T} ( \mathbf{B} + \ssigma_{M} )^{-1} 
\mathbf{C}$, which is positive semidefinite. By explicit calculation, this 
supremum requires $\phi = 0$ if $\vert c_{2} \vert \geq \vert c_{1} \vert$, 
and $ \phi = \pi$ otherwise. The resulting expression is a monotonic 
decreasing function of $\mu_{s}$, since one can see by inspection that its first derivative with respect to $\mu_{s}$ is always nonpositive. Therefore, one 
needs to set $\mu_{s} = 0$ in order to attain the supremum and in this limit 
the value of $\mu$ becomes irrelevant. The limit $\mu_{s} \to 0$ makes the 
Gaussian POVM $\hat{\boldsymbol{\Pi}}_{\alpha}$ to collapse into the 
spectral measure of the $\hat{x}$($\hat{p}$) quadrature for $\phi = 0(\pi)$. 
\end{proof}
This result immediately leads us to the aforementioned criterion:
\begin{propos}
\label{propos:WNScanon}
A two-mode Gaussian state $\rrho_{AB}$ in canonical form is WNS ($B \rightarrow A$) if and only if the parameters of its CM satisfy:
\begin{equation}
\label{eq:WNScanon}
    a -  c^{2}/b   <  1/2\,, \quad c = \max \{ \vert c_{1} \vert, \vert c_{2} \vert \}
\end{equation}
\end{propos}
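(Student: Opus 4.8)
The plan is to reduce everything to the previous proposition together with the criterion, recalled above, that a single-mode Gaussian state is nonclassical precisely when the least eigenvalue of its covariance matrix lies below $1/2$. By that proposition, the conditional state of mode $A$ with the highest nonclassical depth is obtained in the limit $\mu_{s}\to 0$ of the measurement family \eqref{eq:CMmeas}, with $\phi=0$ when $|c_{2}|\geq|c_{1}|$ and $\phi=\pi$ otherwise. Hence it suffices to evaluate, along this optimal sequence of Gaussian POVMs, the least eigenvalue $\lambda_{m}$ of the conditional covariance matrix $\ssigma_{c}=\mathbf{A}-\mathbf{C}^{T}(\mathbf{B}+\ssigma_{M})^{-1}\mathbf{C}$ with $\mathbf{A}=a\,\Id_{2}$, $\mathbf{B}=b\,\Id_{2}$, $\mathbf{C}=\diag(c_{1},c_{2})$, and then to test whether $\lambda_{m}<1/2$.

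First I would carry out the limit $\mu_{s}\to 0$ explicitly. For $\phi=0$ the matrix \eqref{eq:CMmeas} is diagonal with entries $(1+\kappa_{s})/(2\mu\mu_{s})$ and $(1-\kappa_{s})/(2\mu\mu_{s})$; since $\kappa_{s}=\sqrt{1-\mu_{s}^{2}}$ the first entry diverges while the second vanishes, so $\ssigma_{M}\to\diag(+\infty,0)$, whence $(\mathbf{B}+\ssigma_{M})^{-1}\to\diag(0,1/b)$ (using $b\geq 1/2>0$, so $\mathbf{B}+\ssigma_{M}$ is positive definite for every $\mu_{s}$), and therefore $\ssigma_{c}\to\diag\big(a,\;a-c_{2}^{2}/b\big)$. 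The case $\phi=\pi$ is the mirror image and gives $\ssigma_{c}\to\diag\big(a-c_{1}^{2}/b,\;a\big)$. In either case, since $a\geq 1/2>0$ and $c_{i}^{2}/b\geq 0$, the least eigenvalue of the limiting conditional covariance matrix is $a-c^{2}/b$ with $c=\max\{|c_{1}|,|c_{2}|\}$, and by the previous proposition this value is exactly the infimum $\lambda_{m}$ of the least eigenvalue of $\ssigma_{c}$ over all Gaussian POVMs.

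It then only remains to turn $\lambda_{m}<1/2$ into the claimed statement. If $a-c^{2}/b<1/2$, then by continuity of the eigenvalues in $\mu_{s}$ the least eigenvalue of $\ssigma_{c}$ is still strictly below $1/2$ for all sufficiently small but positive $\mu_{s}$, so there is a genuine Gaussian POVM whose conditional state is nonclassical, i.e.\ $\rrho_{AB}$ is WNS. Conversely, if $\rrho_{AB}$ is WNS then some Gaussian POVM produces a conditional state with least CM eigenvalue below $1/2$, which forces $\lambda_{m}=a-c^{2}/b<1/2$. This establishes \eqref{eq:WNScanon}. The only step that requires genuine care is the degenerate matrix limit in the second paragraph — one entry of $\ssigma_{M}$ blowing up while the conjugate one collapses — which I would make rigorous by writing $\ssigma_{M}=\diag(1/\delta,\delta')$ with $\delta,\delta'\to 0^{+}$ and expanding $(\mathbf{B}+\ssigma_{M})^{-1}$ to leading order; the monotonicity in $\mu_{s}$ already proven guarantees that the infimum is indeed attained in this limit, so nothing is lost by passing to it. Everything else is elementary $2\times2$ algebra.
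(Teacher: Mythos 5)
Your proposal is correct and follows essentially the same route as the paper: invoke the optimality result of the preceding proposition, evaluate the conditional CM in the limit $\mu_{s}\to 0$ with the appropriate phase to obtain $\lambda_{m}=a-c^{2}/b$, and compare with $1/2$. The extra care you take with the degenerate limit (noting that strictness of the inequality plus continuity in $\mu_{s}$ yields a genuine POVM achieving nonclassicality) is a welcome refinement of the paper's terser "one can explicitly compute" step, but it is not a different argument.
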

\begin{proof}
Let us suppose that $ c = \vert c_{2} \vert \geq \vert c_{1} \vert$, so that we can fix $\phi = 0$ in Eq.(\ref{eq:CMmeas}). Then, for $\mu_{s} \to 0$, one can explicitly compute $\lambda_{m} = a - c^{2}/b$. But the initial state $\rrho_{AB}$ is WNS 
if and only if the least classical conditional state \emph{is nonclassical}, which amounts to $\lambda_{m} < 1/2$, as stated by Eq.(\ref{eq:WNScanon}). Otherwise, 
if $ c = \vert c_{1} \vert > \vert c_{2} \vert$, one should choose $\phi = \pi$ to arrive at the same conclusion.
\end{proof}
\noindent
We call this property \emph{weak} nonclassical steering because it does not imply entanglement. Indeed, there are (non isolated) choices for the values of $a,b,c_{1} , c_{2}$ that correspond to physical states ($\ssigma > 0$ and fulfilling UR) that 
are separable and WNS, e.g. $a=b=13.9$, $c_{1} = 4.6$, $c_{2} = -13.7$. Besides, 
there exist WNS states with $ c_{1} c_{2} > 0$, which is a sufficient condition for separability. Motivated by these results, we introduce 
the following more stringent notion of nonclassical steering:
\begin{defn}
A two-mode Gaussian state $\rrho_{AB}$ in canonical form is called \emph{strongly nonclassically steerable} (SNS) ($B \to A$) if the measurement of 
\emph{any} quadrature on mode $B$ generates a nonclassical conditional state
of mode $A$. 
\end{defn}
\noindent
Following the proof of Proposition\ref{propos:WNScanon}, we immediately conclude:
\begin{propos}
\label{propos:SNScanon}
A two-mode Gaussian state $\rrho_{AB}$ in canonical form is SNS ($B \to A$) if and only if the parameters of its CM satisfy:
\begin{equation}
\label{eq:SNScanon}
    a - c'^{2}/b \ < 1/2\,, \quad {c'} = 
    \min \{ \vert c_{1} \vert, \vert c_{2} \vert \}
\end{equation}
\end{propos}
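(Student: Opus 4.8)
The plan is to mirror the proof of Proposition~\ref{propos:WNScanon}, but extremising over quadratures in the opposite sense: weak steerability only needs the \emph{best} quadrature to yield a nonclassical conditional state, whereas strong steerability needs \emph{every} quadrature to do so. Concretely, $\rrho_{AB}$ is SNS ($B\to A$) iff $\lambda_m(\phi)<\tfrac12$ for all values of the homodyne phase $\phi$, where $\lambda_m(\phi)$ is the smallest eigenvalue of the conditional CM $\ssigma_c(\phi)$ left on $A$ by homodyning the corresponding quadrature of $B$ (independent of the outcome $\alpha$, as recalled in the preceding proofs).

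First I would take in Eq.~(\ref{eq:CMmeas}) the quadrature limit $\mu_s\to0$ with $\phi$ kept arbitrary. Writing the spectral decomposition $\ssigma_M=\lambda_+\,\mathbf u_\phi\mathbf u_\phi^T+\lambda_-\,\mathbf v_\phi\mathbf v_\phi^T$, one has $\lambda_+\to\infty$ and $\lambda_-\to0$, while the unit eigenvectors converge to $\mathbf u_\phi\propto(\cos\tfrac\phi2,-\sin\tfrac\phi2)^T$ (anti-squeezed) and $\mathbf v_\phi=(\sin\tfrac\phi2,\cos\tfrac\phi2)^T$ (the ``infinitely squeezed'', i.e. measured, direction). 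Since $\mathbf B=b\,\Id_2$ commutes with everything, $(\mathbf B+\ssigma_M)^{-1}=\tfrac1{b+\lambda_+}\mathbf u_\phi\mathbf u_\phi^T+\tfrac1{b+\lambda_-}\mathbf v_\phi\mathbf v_\phi^T\to\tfrac1b\,\mathbf v_\phi\mathbf v_\phi^T$. Substituting into $\ssigma_c=\mathbf A-\mathbf C^T(\mathbf B+\ssigma_M)^{-1}\mathbf C$ with $\mathbf C=\mathbf C^T=\diag(c_1,c_2)$ gives $\ssigma_c(\phi)=a\,\Id_2-\tfrac1b\,(\mathbf C\mathbf v_\phi)(\mathbf C\mathbf v_\phi)^T$.

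Next I would read off the spectrum. The subtracted matrix is positive semidefinite of rank one, with nonzero eigenvalue $|\mathbf C\mathbf v_\phi|^2=c_1^2\sin^2\tfrac\phi2+c_2^2\cos^2\tfrac\phi2$, so $\ssigma_c(\phi)$ has eigenvalues $a$ and $\lambda_m(\phi)=a-\tfrac1b\big(c_1^2\sin^2\tfrac\phi2+c_2^2\cos^2\tfrac\phi2\big)$, the latter being the smaller. As $\phi$ runs over $[0,2\pi)$, $\sin^2\tfrac\phi2$ runs over $[0,1]$, hence $c_1^2\sin^2\tfrac\phi2+c_2^2\cos^2\tfrac\phi2$ runs over $[\min\{c_1^2,c_2^2\},\max\{c_1^2,c_2^2\}]$; therefore $\sup_\phi\lambda_m(\phi)=a-c'^2/b$ with $c'=\min\{|c_1|,|c_2|\}$, and this value is \emph{attained} (at $\phi=0$ if $|c_2|\le|c_1|$, at $\phi=\pi$ otherwise). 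Since the supremum is attained, ``$\lambda_m(\phi)<\tfrac12$ for every $\phi$'' is equivalent to $a-c'^2/b<\tfrac12$, i.e. Eq.~(\ref{eq:SNScanon}).

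I expect the only delicate point to be the $\mu_s\to0$ limit at generic $\phi$, since $\ssigma_M$ itself diverges: one must argue at the level of $(\mathbf B+\ssigma_M)^{-1}$ and verify that the diverging eigendirection of $\ssigma_M$ drops out of the limiting inverse while the conjugate one contributes exactly $1/b$; everything afterwards is the same elementary $2\times2$ diagonalisation as in Proposition~\ref{propos:WNScanon}. As a sanity check, if $c'=0$ then $a-c'^2/b=a\ge\tfrac12$, so the state is never SNS — correctly, since homodyning the $B$-quadrature conjugate to the only correlated one returns $A$ unchanged, hence classical; and since $c'\le c$ one always has SNS $\Rightarrow$ WNS, as the terminology demands.
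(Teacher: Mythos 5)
Your proof is correct and follows essentially the same route as the paper, which simply invokes the computation from Proposition~\ref{propos:WNScanon} with the ``wrong'' choice of phase. You are in fact slightly more thorough: by computing $\lambda_m(\phi)=a-\tfrac1b\bigl(c_1^2\sin^2\tfrac\phi2+c_2^2\cos^2\tfrac\phi2\bigr)$ for \emph{every} homodyne phase you verify explicitly that the worst quadrature is one of the two extremes $\phi=0,\pi$, a point the paper asserts implicitly, and your handling of the $\mu_s\to0$ limit via the spectral decomposition of $(\mathbf B+\ssigma_M)^{-1}$ is sound since $\mathbf B+\ssigma_M$ stays invertible throughout.
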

\begin{proof}
The least nonclassical conditional state is reached, among all quadrature measurements, by the ``wrong'' choice of phase ($\phi = \pi$ for $\vert c_{2} 
\vert \geq \vert c_{1} \vert$ and $\phi = 0$ otherwise). Therefore, it is 
sufficient to demand that the minimum eigenvalue of $\ssigma_{c}$ is less
than $\frac{1}{2}$ also in this case, thereby arriving at Ineq.(\ref{eq:SNScanon}).
\end{proof}
In order to generalize these definitions from two-mode Gaussian states in canonical form to \emph{all} Gaussian states of two modes, we should take into account 
(local) single-mode squeezing transformations, which may alter the 
nonclassicality of each mode independently of their quantum correlations. However, since any two-mode Gaussian state can be brought to its \emph{unique} canonical form through LGUTs without altering the correlations, we can extend the definitions in 
the following way:
\begin{defn}
A generic two-mode Gaussian state $\rrho_{AB}$ is called weakly (strongly) nonclassically steerable if the \emph{unique} Gaussian state ${\rrho'}_{AB}$ 
\emph{in canonical form} related to $\rrho_{AB}$ by LGUTs is weakly (strongly) nonclassically steerable. 
\end{defn}
In order to extend also the results regarding the necessary and sufficient conditions for WNS/SNS, we need to specify the effect of LGUTs on $\ssigma_{c}$. Any Gaussian unitary transformation is implemented by a symplectic linear transformation in the phase space formalism, and viceversa. Therefore a LGUT on a two-mode system is described by an element $S_{A} \oplus S_{B}$ acting on quantum phase space, where $S_{A(B)} \in \SL_{A(B)}(2)$. The $2 \times 2$ blocks of a generic $\ssigma$ 
are transformed according to:
\begin{equation}
    \mathbf{A'} \ = \ S_{A} \mathbf{A} S_{A}^{T} \ \ \ \ \ \mathbf{B'} \ = \ S_{B} \mathbf{A} S_{B}^{T} \ \ \ \ \ \mathbf{C'} \ = \ S_{A} \mathbf{C} S_{B}^{T}
\end{equation}
Let us now suppose that $S_{A} \oplus S_{B}$ brings the initial $\ssigma$ in canonical form, so that $\mathbf{A'} = a' \Id_{2}$, $\mathbf{B'} = b' \Id_{2}$ and $\mathbf{C'} = \diag ( {c'}_{1}, {c'}_{2} )$. The conditional CM $\ssigma_{c}$ resulting from a Gaussian measurement with CM $\ssigma_{M}$ on the initial state with CM $\ssigma$ can be rearranged as:
\begin{equation}
    \ssigma_{c}  \ = \ S_{A}^{T} \left[ \mathbf{A'} - \mathbf{C'} \left( \mathbf{B'} + {\ssigma'}_{M} \right)^{-1} \mathbf{C'}^{T}  \right]  S_{A}
\end{equation}
where the CM of the measurement has been redefined according to ${\ssigma'}_{M} = S_{B}^{T} \ssigma_{M} S_{B}$. We see that performing the measurement (associated with) $\ssigma_{M}$ on the two-mode state with CM $\ssigma$ is equivalent to perform the modified measurement ${\ssigma'}_{M}$ on the canonical form state related to $\ssigma$ and then performing the transformation induced by $S_{A}$ on the resulting conditional CM. This means that we can simply factor out the action of $S_{A}$ because it doesn't interfere with the steering process. Meanwhile, as long as $S_{B}$ does not introduce infinite squeezing, we can still approach the desired limit of ${\ssigma'}_{M}$, acting on the state in canonical form, by taking a limit of $\ssigma_{M}$ with a suitable phase. Finally, to get the necessary and sufficient conditions for WNS and SNS in the general case, we can now rewrite Ineq.(\ref{eq:WNScanon}) and Ineq.(\ref{eq:SNScanon}), replacing $a,b,c_{1}, c_{2}$ with their expressions in terms of symplectic invariants \cite{sera04} $I_{1}  = a^{2}$, $ I_{2}  = b^{2}$, $I_{3} = c_{1} c_{2}$, and 
$I_{4} = ( ab - {c_{1}}^{2} ) (ab - {c_{2}}^{2} )$, 
which are indeed invariant  under LGUTs. 
\begin{propos}
A generic two-mode Gaussian state $\rrho_{AB}$ is WNS (SNS) from mode $B \to A$ 
if and only if its symplectic invariants satisfy the inequality:
\begin{equation}
\label{eq:WNSgeneral}
    \dfrac{ I_{1} I_{2} - {I_{3}}^{2} + {I_{4}} \mp \sqrt{ \left( I_{1} I_{2} - {I_{3}}^{2} + I_{4} \right)^{2} - 4 I_{1} I_{2} {I_{4}} }}{2 I_{2} \sqrt{I_{1}}}  <  \frac{1}{2}\,. \notag
\end{equation}
\end{propos}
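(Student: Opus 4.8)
The plan is to reduce the claim to the canonical-form criteria already established and then perform a purely algebraic change of variables. By the definition of WNS/SNS for a generic two-mode Gaussian state, $\rrho_{AB}$ is WNS (SNS) from $B\to A$ exactly when its unique canonical form ${\rrho'}_{AB}$ is, and by Proposition~\ref{propos:WNScanon} (Proposition~\ref{propos:SNScanon}) this holds iff the canonical-form parameters satisfy $a-c^{2}/b<\frac12$ with $c=\max\{|c_{1}|,|c_{2}|\}$ (respectively with $c'=\min\{|c_{1}|,|c_{2}|\}$ in place of $c$). Because $I_{1}=a^{2}$, $I_{2}=b^{2}$, $I_{3}=c_{1}c_{2}$ and $I_{4}=(ab-c_{1}^{2})(ab-c_{2}^{2})$ are invariant under LGUTs, they take the same value on $\rrho_{AB}$ and on ${\rrho'}_{AB}$; hence it suffices to rewrite the single inequality $a-c^{2}/b<\frac12$ (and its ``$\min$'' counterpart) in terms of $I_{1},\dots,I_{4}$.

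First I would note that the uncertainty relation gives $a,b\ge\frac12>0$, so that $ab=\sqrt{I_{1}I_{2}}$ and $b=\sqrt{I_{2}}$ carry no sign ambiguity. Expanding $I_{4}=a^{2}b^{2}-ab\,(c_{1}^{2}+c_{2}^{2})+c_{1}^{2}c_{2}^{2}$ and using $c_{1}^{2}c_{2}^{2}=I_{3}^{2}$ then yields the elementary symmetric functions of the pair $\{c_{1}^{2},c_{2}^{2}\}$:
\[
c_{1}^{2}c_{2}^{2}=I_{3}^{2},\qquad c_{1}^{2}+c_{2}^{2}=\frac{I_{1}I_{2}+I_{3}^{2}-I_{4}}{\sqrt{I_{1}I_{2}}}.
\]
Consequently $\max\{c_{1}^{2},c_{2}^{2}\}$ and $\min\{c_{1}^{2},c_{2}^{2}\}$ equal $\frac12\big[(c_{1}^{2}+c_{2}^{2})\pm\sqrt{(c_{1}^{2}+c_{2}^{2})^{2}-4c_{1}^{2}c_{2}^{2}}\,\big]$, the radicand being $(c_{1}^{2}-c_{2}^{2})^{2}\ge0$. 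Substituting into $a-\dfrac{c^{2}}{b}=\dfrac{\sqrt{I_{1}I_{2}}-c^{2}}{\sqrt{I_{2}}}$ and using $2\sqrt{I_{1}I_{2}}-(c_{1}^{2}+c_{2}^{2})=(I_{1}I_{2}-I_{3}^{2}+I_{4})/\sqrt{I_{1}I_{2}}$ collapses the numerator to $\frac12\big[(I_{1}I_{2}-I_{3}^{2}+I_{4})/\sqrt{I_{1}I_{2}}\mp\sqrt{(c_{1}^{2}+c_{2}^{2})^{2}-4I_{3}^{2}}\,\big]$.

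The only step that must be done with care is the polynomial identity
\[
\big(I_{1}I_{2}+I_{3}^{2}-I_{4}\big)^{2}-4I_{1}I_{2}\,I_{3}^{2}=\big(I_{1}I_{2}-I_{3}^{2}+I_{4}\big)^{2}-4I_{1}I_{2}\,I_{4},
\]
verified by expanding both sides, with all cross terms matching; it is precisely what converts the natural radical $\sqrt{(c_{1}^{2}+c_{2}^{2})^{2}-4c_{1}^{2}c_{2}^{2}}$ into $\sqrt{(I_{1}I_{2}-I_{3}^{2}+I_{4})^{2}-4I_{1}I_{2}I_{4}}/\sqrt{I_{1}I_{2}}$. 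Dividing the numerator by $\sqrt{I_{2}}$ and simplifying $\sqrt{I_{1}I_{2}}\,\sqrt{I_{2}}=I_{2}\sqrt{I_{1}}$ then gives
\[
a-\frac{c^{2}}{b}=\frac{I_{1}I_{2}-I_{3}^{2}+I_{4}-\sqrt{\big(I_{1}I_{2}-I_{3}^{2}+I_{4}\big)^{2}-4I_{1}I_{2}I_{4}}}{2I_{2}\sqrt{I_{1}}},
\]
the $-$ sign arising from $c=\max\{|c_{1}|,|c_{2}|\}$; choosing instead $c'=\min\{|c_{1}|,|c_{2}|\}$ flips the sign of the radical and yields the $+$ sign. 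Imposing ``$<\frac12$'' then reproduces the inequality in the statement, with the upper sign ($-$) characterising WNS and the lower sign ($+$) characterising SNS. I expect no substantive obstacle beyond this bookkeeping: the delicate points are merely the positivity of the square roots (licensed by the UR, which also makes $I_{4}=\det\ssigma>0$) and the cross-check of the displayed identity; and, as already observed, the symplectic $S_{B}$ bringing $\rrho_{AB}$ to canonical form is bounded, so the optimal (and worst) quadrature limits of the measurement CM remain attainable after conjugation and the canonical-form criteria transfer without loss.
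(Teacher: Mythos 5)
Your proposal is correct and follows exactly the route the paper intends (the paper states this proposition without a written proof, merely indicating that one should substitute the invariants $I_1,\dots,I_4$ into the canonical-form criteria): reduce to Propositions on the canonical form via LGUT-invariance, express $c_1^2+c_2^2$ and $c_1^2c_2^2$ through $I_1,\dots,I_4$, and identify $\max/\min\{c_1^2,c_2^2\}$ with the two roots, which yields the $\mp$ branches. The algebra checks out, including the key identity $(I_1I_2+I_3^2-I_4)^2-4I_1I_2I_3^2=(I_1I_2-I_3^2+I_4)^2-4I_1I_2I_4$.
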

\noindent
Strong nonclassical steering obviously implies weak nonclassical steering, but it also implies entanglement. We will show this implicitly by proving a stronger result:
\begin{thm}
\label{thm:SNS-EPRsteer}
A two-mode Gaussian state $\rrho_{AB}$ that is SNS $B \to A$ is 
also EPR-steerable in the same direction, therefore also entangled.
\end{thm}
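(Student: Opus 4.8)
The plan is to reduce the claim to the canonical form and then read off the result by comparing the Gaussian EPR-steering criterion with the SNS condition of Proposition~\ref{propos:SNScanon}. First, EPR steerability in the direction $B\to A$ is a local-unitary invariant: a local unitary on $B$ can be absorbed by Bob into his measurement (followed by a relabelling of outcomes), while a local unitary on $A$ turns any local-hidden-state decomposition of Alice's assemblage into another one; hence it is preserved by LGUTs. Since SNS is itself defined through the unique canonical form, it suffices to prove the implication for $\rrho_{AB}$ already in canonical form, with $\mathbf{A}=a\,\Id_2$, $\mathbf{B}=b\,\Id_2$, $\mathbf{C}=\diag(c_1,c_2)$.

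Second, I would invoke the Gaussian EPR-steering criterion \cite{wise07,jone07}: $\rrho_{AB}$ is EPR-steerable $B\to A$ if and only if the Schur complement $\mathbf{M}:=\mathbf{A}-\mathbf{C}\,\mathbf{B}^{-1}\mathbf{C}^{T}$ fails the single-mode uncertainty relation, i.e. $\det\mathbf{M}<\tfrac14$. Only the ``if'' direction is needed, and it is also transparent from the discussion above: homodyning $\hat{x}_B$ and $\hat{p}_B$ on mode $B$ --- the $\mu_s\to0$ limits of the measurements in Eq.~(\ref{eq:CMmeas}) --- produces conditional CMs $\diag(a-c_1^2/b,\,a)$ and $\diag(a,\,a-c_2^2/b)$ for mode $A$, so that Bob's best Gaussian estimates of $\hat{x}_A$ and $\hat{p}_A$ have variances $a-c_1^2/b$ and $a-c_2^2/b$, whose product being $<\tfrac14$ certifies steering by the Reid argument. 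Either way, in canonical form $\mathbf{M}=\diag(a-c_1^2/b,\,a-c_2^2/b)$ and, since $\{c_1^2,c_2^2\}=\{{c'}^2,c^2\}$ with $c=\max\{|c_1|,|c_2|\}$ and $c'=\min\{|c_1|,|c_2|\}$ as in Propositions~\ref{propos:WNScanon} and~\ref{propos:SNScanon}, the steering condition reads $(a-c^2/b)\,(a-{c'}^2/b)<\tfrac14$. Note that the two factors are precisely the smallest eigenvalues of the conditional CMs attained by the ``optimal'' and the ``least favourable'' quadrature measurements on $B$.

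Third, I would close the argument by an elementary estimate. By Schur complement, $\ssigma_{AB}\ge0$ (which holds because $\ssigma_{AB}+i\boldsymbol{\Omega}/2\ge0$) and $\mathbf{B}>0$ give $\mathbf{M}\ge0$, so $0\le a-c^2/b\le a-{c'}^2/b$ (the ordering because $c\ge c'$). By Proposition~\ref{propos:SNScanon}, SNS $B\to A$ is exactly the statement $a-{c'}^2/b<\tfrac12$. Therefore
\[
 \det\mathbf{M}=\big(a-c^2/b\big)\big(a-{c'}^2/b\big)\ \le\ \big(a-{c'}^2/b\big)^{2}\ <\ \frac{1}{4},
\]
so $\rrho_{AB}$ is EPR-steerable $B\to A$; entanglement follows because every EPR-steerable state is entangled.

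The computation itself is immediate; the points that need care are quoting the Gaussian steering criterion in the correct form (equivalently, recognising that the two Reid inference variances entering it are exactly the extreme conditional eigenvalues $a-c^2/b$ and $a-{c'}^2/b$ of Propositions~\ref{propos:WNScanon} and~\ref{propos:SNScanon}) and checking the LGUT-invariance used in the reduction to canonical form. As a by-product, the chain $a-{c'}^2/b<\tfrac12\Rightarrow(a-c^2/b)(a-{c'}^2/b)<\tfrac14\Rightarrow a-c^2/b<\tfrac12$ makes explicit that SNS $\Rightarrow$ EPR steering $\Rightarrow$ WNS, both implications being strict in general.
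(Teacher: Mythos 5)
Your proof is correct and follows essentially the same route as the paper: reduce to canonical form by LGUT invariance, quote the Gaussian steering criterion of \cite{jone07} in the form $(a-c_1^2/b)(a-c_2^2/b)<\tfrac14$, and observe that the SNS bound forces this. Your explicit check that both factors are nonnegative (via positivity of the Schur complement $\mathbf{A}-\mathbf{C}\mathbf{B}^{-1}\mathbf{C}^{T}$) is a detail the paper leaves implicit, and it is genuinely needed to pass from each factor being $<\tfrac12$ to their product being $<\tfrac14$.
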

\begin{proof}
Following \cite{jone07}, EPR-steerability $B \to A$ of a Gaussian state by Gaussian measurements amounts to the \emph{violation} of the inequality $\ssigma + i/2\,  \boldsymbol{\omega}_{A} \oplus \boldsymbol{\mathbb{0}}_{B} \geq  0$ by  
its CM. Exploiting LGUT-invariance, we can restrict the comparison between EPR-steerability and SNS to Gaussian states in canonical form. In this case, keeping 
in mind that $a > \frac{1}{2}$, violation of the above inequality reduces to \cite{jone07,kogi15} $(a - c_{1}^{2}/b) (a - c_{2}^{2}/b) < 1/4$,
which is certainly true under the SNS Ineq.(\ref{eq:SNScanon}). 
\end{proof}
At this point, a question may arise on whether WNS is related to the presence of 
Gaussian Quantum Discord (GQD)  \cite{zurek01,ABC16,paris10,ades10,HGR15}. In particular, one 
may ask whether there is a strictly positive lower bound to GQD for states exhibiting WNS, since Gaussian states with zero GQD, being  factorized, are 
obviously \emph{not} WNS.  By construction of explicit examples, 
we now show that this is not the case. It suffices to consider Gaussian states in canonical form with $a = (n+2)/(2n+1)$, $b=n$, $c_{1} = (2n)^{-1/2}$ and $c_{2} = - [ 2n/ (2n+1)]^{1/2}$, for any integer $n > 2$. By direct computation one shows that the  
CMs are $\geq 0$ and obeying the UR. They are also WNS because they fulfill Ineq.(\ref{eq:WNScanon}). However, their GQDs $\mathcal{D}_{A \vert B}$ and $\mathcal{D}_{B \vert A}$ may 
attain arbitrarily small values in the limit $n \to +\infty$.  
\par
Let us now focus on the relevant class of two-mode squeezed thermal states 
(TMST). The parameters of their CMs are given by ($r \in \reals^+$):
\begin{align}
\label{eq:TMSTCM}
a/b & = \frac12 (1 + N_{A} + N_{B})\, \cosh 2r \pm \ \frac12 (N_{A} - N_{B}) 
\notag \\
c  & =  c_{1}  =  - c_{2}  =  \frac12 (1 + N_{A} + N_{B})\,\sinh 2r 
\end{align}
where $N_{i}$ ( $i=A,B$) denotes the average number of thermal photons in each mode. 
Since TMST are all and only those states whose CM is in canonical form with the additional constraint that $c_{1} = - c_{2} = c$, evidently the conditions for 
WNS and SNS coincide for them: the most nonclassical conditional state on mode 
$A$ is obtained by \emph{any} quadrature measurement on mode $B$. From the proof of Theorem\ref{thm:SNS-EPRsteer}, it is also clear that TMST states are EPR-steerable from one mode to the other \emph{if and only if} they are nonclassically steerable (strongly and therefore also weakly) in the same direction. This observation provides a new, somehow surprising, role for the notion of P-nonclassicality: it is \emph{the} 
property that Alice should check, after Bob's measurement on his mode, to certify that the shared TMST state is indeed entangled; we note that this fact could find applications in one-sided device-independent quantum key distribution \cite{bran12}. Note that the \emph{universal} 
steerability condition for TMST states becomes $\cosh 2r >  1 \ +  2 N_{A} (1 + 2 N_{B} )/(1 + N_{A} + N_{B})$, 
which is readily interpreted as a lower bound on the two-mode squeezing 
needed to make the TMST steerable $B \to A$.
\par
In order to illustrate nonclassical steering for TMST states, we employ plots of 
{\em triangoloids}. Consider the conditional CM of mode $A$ parametrized by 
$(\mu_{c}, \mu_{sc}, \phi_{c})$ as in Eq.(\ref{eq:CMmeas}). For TMST it is 
possible to compute the functional dependence of these parameters on the 
initial TMST parameters $N_{A}, N_{B}, r$ and the POVM's parameters 
$\mu, \mu_{s}, \phi$ \footnote{see Eq.(\ref{eq:condCMresult}) of Supplemental Material}. 
In particular we found that $\phi_{c} = \phi$, thus the phase may be discarded. 
For a fixed TMST state, we can thus plot the region of achievable 
conditional states in the $(\mu_{c}, \mu_{sc})$-space, as 
obtained by considering all the POVM's parameters $\mu$ and $\mu_{s}$. 
These are the curvilinear triangles (triangoloids) in Fig.\ref{fig:triangsymm01-12}, where we also displayed the nonclassical region (light-brown region), i.e. those parameters corresponding to nonclassical states \footnote{see Eq.(\ref{eq:nonclreg}) of Supplemental Material}. The TMST state associated with a 
given triangoloid is nonclassically steerable $B \to A$ when the 
triangoloid intersects the nonclassical region,
as in the right panel of Fig.\ref{fig:triangsymm01-12}.
We shaded the intersection area according to the nonclassical depths, with lighter regions for higher $\mathfrak{t}$. As it may be also appreciated graphically, the decisive point for nonclassical steering of a TMST is the blue, lower vertex of the triangoloid, attained by quadrature detection on mode B: if this point is outside the nonclassical region, all other points of the triangoloid are outside too. Notice that the equivalence of EPR steering and nonclassical steering for TMSTs has a neat graphical interpretation: the light-brown nonclassical region is \emph{the largest} region such that a TMST whose triangoloid intersects it is necessarily entangled.
\begin{figure}[h!]
\centering
\includegraphics[width=0.41 \columnwidth]{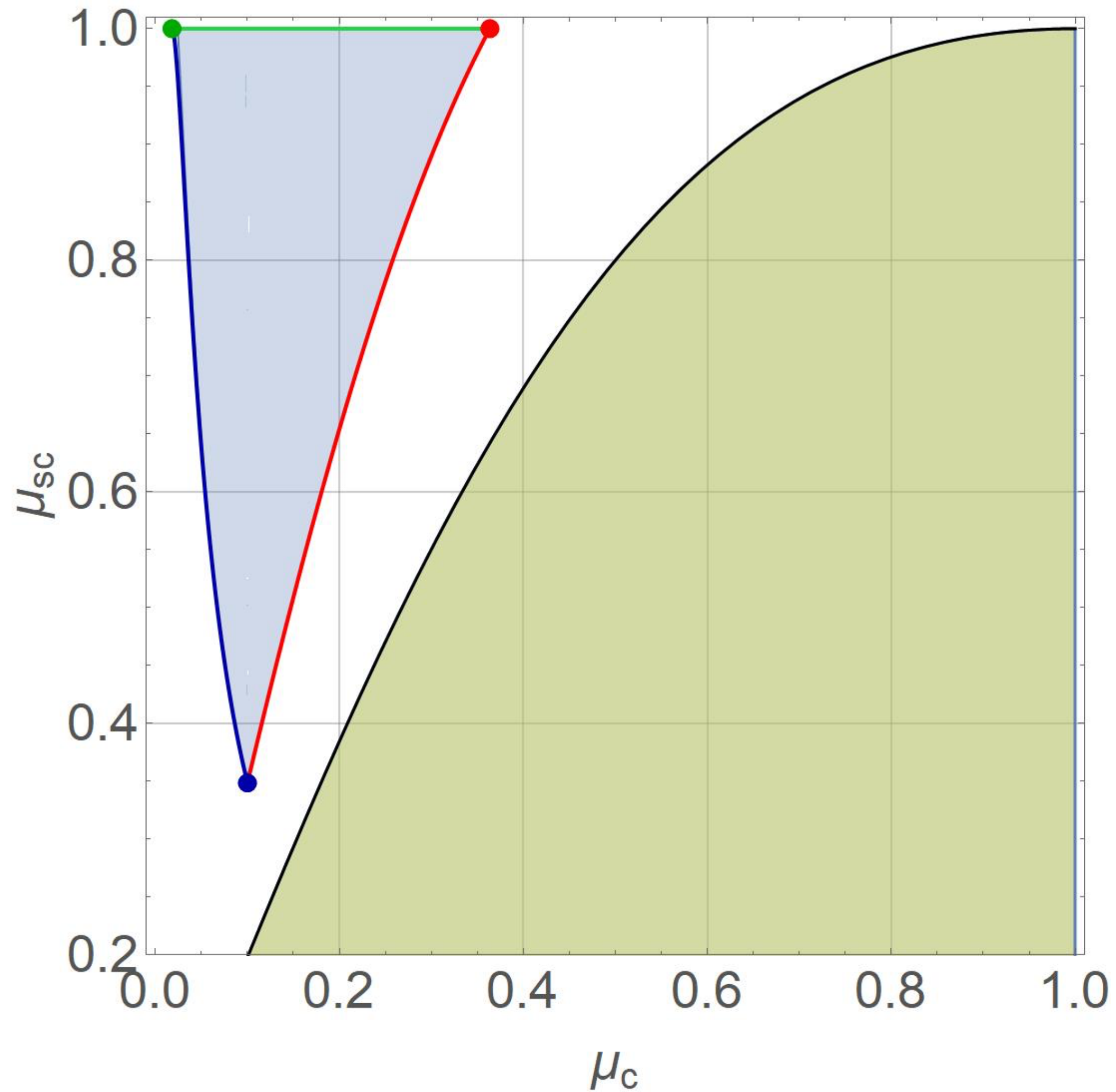} $\quad $
\includegraphics[width=0.41 \columnwidth]{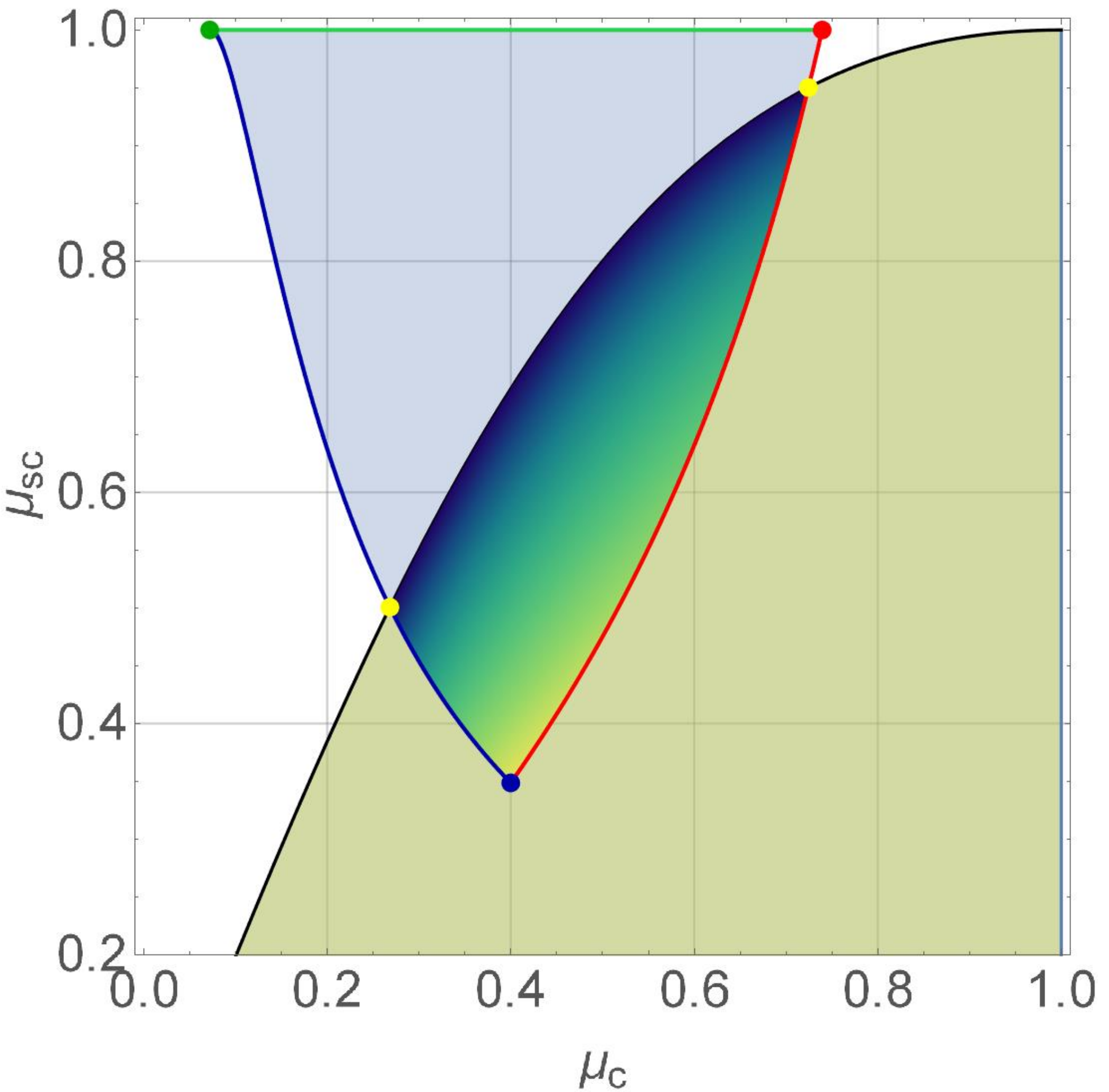}
\caption{\label{fig:triangsymm01-12}
(Left): Triangoloid for TMST state with $N_A = N_B = 4.5$ and $r = 1.2$, 
$\mu_{c}$ is the purity of the conditional state, while $\mu_{sc} = (1 + 2 \sinh^{2} r_{c} )^{-1}$ quantifies squeezing of the conditional state. The light-brown region contains all nonclassical conditional states. (Right):  triangoloid 
for $N_{A} = N_{B} = 0.75$ and $r = 1.2$.}
\end{figure}
\par
The rightmost, red side of the triangoloids is attained by
projective measurements on squeezed displaced vacuum states
(squeezing increases, i.e. $\mu_{s} \to 0$ from the upper red point 
to the lower blue one). Notice also that the uppermost, green side, obtained 
by non-squeezed measurements ($\mu_{s} = 1$) is always at $\mu_{sc} = 1$, i.e. 
the associated conditional states are always classical.
\par
As a final comment, we should mention that the quantities on the left 
sides of (\ref{eq:WNScanon}) and (\ref{eq:SNScanon}) are the conditional 
variances appearing in the Reid EPR-criterion \cite{reid89,reid09}, whose 
test is already experimentally accessible \cite{ou92,mid10}. This is in 
agreement with the well-known result stating that quadrature measurements 
are the best choice for Gaussian EPR steering \cite{kiuk17}. In turn, 
WNS amounts to ask that at least one of such variances is smaller than 
the vacuum value, whereas SNS requires the same to be true 
\emph{for both} these variances separately. EPR-steerability instead 
asks that \emph{the product} of them is smaller than the value attained
by the same quantity on the vacuum \cite{cav09}. This suggests a new 
hierarchy of steering concepts in the Gaussian landscape, 
with WNS being the weakest type, weaker than entanglement, and SNS 
the strongest, while EPR steering is in between, stronger than entanglement 
but weaker than SNS.

\bibliography{apssamp}

\widetext
\clearpage
\pagestyle{empty}
\begin{center}
\textbf{\Large Supplemental Material: Nonclassical steering with two-mode Gaussian states}
\end{center}
\setcounter{equation}{0}
\setcounter{figure}{0}
\setcounter{table}{0}
\setcounter{page}{1}
\makeatletter
\renewcommand{\theequation}{S\arabic{equation}}
\renewcommand{\thefigure}{S\arabic{figure}}
\renewcommand{\bibnumfmt}[1]{[S#1]}
\renewcommand{\citenumfont}[1]{S#1}

\section{1. Derivation of analytical expressions for triangoloid plots}
The covariance matrix (CM) $\ssigma_{c}$ of the conditional state $\rrho_{c, \alpha}$, being single-mode, can be written in the following form:
\begin{equation}
\label{eq:condCM}
\ssigma_{c} \ \ \ = \ \ \ \frac{1}{2 \mu_{c} \mu_{sc}} \left(   \begin{array}{cc} 1 + \kappa_{sc}  \cos \phi & -\kappa_{sc}  \sin \phi \\ - \kappa_{sc}  \sin \phi &  1 - \kappa_{sc}  \cos \phi \end{array}   \right) 
\end{equation}
where $\mu_{c}$ is the purity of the conditional state, $\mu_{sc} = (1 + 2 \sinh^{2} r_{c} )^{-1}$ quantifies the amount of single-mode squeezing $r_{c}$, while $\phi_{c}$ is the squeezing phase and finally $\kappa_{sc} = \sqrt{ 1 - { \mu_{sc}}^{2}}$ for brevity. 
The eigenvalues of $\ssigma_{c}$ are $\lambda_{\pm} = \frac{ 1 \pm \kappa_{sc} }{2 \mu_{c} \mu_{sc}}$ so that, in particular, the conditional state is nonclassical if and only if:
\begin{equation}
\label{eq:nonclreg}
    \lambda_{-} \ = \ \frac{ 1 - \kappa_{sc} }{2 \mu_{c} \mu_{sc}} \ < \ \frac{1}{2} \ \ \ \ \ \implies \ \ \ \ \ {\mu_{sc}} \ < \ \frac{ 2  \mu_{c}}{1 + {\mu_{c}}^{2}}
\end{equation}

which defines implicitly the nonclassical region.
Note that the nonclassicality of $\rrho_{c, \alpha}$ does not depend on $\phi_{c}$ and we can focus just on $\mu_{c}$ and $\mu_{sc}$, which can be retrieved from Eq.(\ref{eq:condCM}) using the following relations:
\begin{equation}
\label{eq:condCMparamextr}
    \det \left[ \ssigma_{c} \right] \ \ = \ \ \ ( 2 {\mu_{c}} )^{-2}  \  , \ \ \ \ \ \ \ \ \ \ \ \Tr \left[ \ssigma_{c} \right] \ \ = \ \ (\mu_{c} \mu_{sc} )^{-1}
\end{equation}
According to the Schur complement formula, the conditional CM $\ssigma_{c}$ for a Gaussian measurement described by $\mu, \mu_{s}$ on mode B of a TMST state with parameters $N_{A}, N_{B}, r$, is given by:
\begin{equation}
\label{eq:condCMform}
    \ssigma_{c} \ \ = \ \ a \cdot \Id_{2} \ - \ c^{2} \left[  \sigma_1 \cdot \left( b \cdot \Id_{2} + \ssigma_{M} \right)^{-1} \cdot \sigma_{1} \right]
\end{equation}
where $\sigma_{1} = \diag(1,-1)$, $\ssigma_{M}$ is the measurement's CM according to Eq.(\ref{eq:CMmeas}) and $a$, $b$ and $c$ are the parameters of the TMST state's CM, defined in Eq.(\ref{eq:TMSTCM}) of the main text. We now define two new parameters to simplify the calculations:
\begin{equation}
    \alpha \ \ := \ \ b + \frac{1}{2 \mu \mu_{s}} \ , \ \ \ \ \ \ \ \ \ \ \ \beta \ \ := \ \ \frac{ \kappa_{s} }{2 \mu \mu_{s}}
\end{equation}
with $\kappa_{s} = \sqrt{1 - {\mu_{s}}^{2}}$ as in Eq.(\ref{eq:CMmeas}). Noting that $\alpha > \beta \geq 0 $, we may write:
\[    \left( b \cdot \Id_{2} + \ssigma_{M} \right)^{-1} \ \ = \ \ \frac{1}{ \alpha^{2} - \beta^{2}} \left(  \begin{array}{cc} \alpha + \beta \cos \phi & - \beta \sin \phi \\
- \beta \sin \phi & \alpha - \beta \cos \phi \end{array} \right)      \]
which can be inserted in Eq.(\ref{eq:condCMform}) to arrive at:
\begin{equation}
    \ssigma_{c} \ \ = \ \ a \cdot \Id_{2} - \frac{c^2 }{ \alpha^{2} - \beta^{2}} \left(  \begin{array}{cc} \alpha - \beta \cos \phi & - \beta \sin \phi \\
- \beta \sin \phi & \alpha + \beta \cos \phi \end{array} \right)
\end{equation}
At this point, $\phi$ is still the phase of the measurement. However, we can now apply Eq.(\ref{eq:condCMparamextr}) and solve for $\mu_{c}$ and $\mu_{sc}$ to get the final result:
\begin{equation}
\label{eq:condCMresult}
    \boxed{ \ \ \ \ \ \ 
    \begin{aligned}
     & \mu_{c}  \ \ = \ \ \frac{1}{2} \sqrt{ \frac{ \alpha^2 - \beta^2}{ (c^2 - a \alpha)^2 - a^2 \beta^2} } \\
     & \mu_{sc}  \ \ = \ \  \frac{ \sqrt{ (\alpha^2 - \beta^2) \left[  (c^2 - a \alpha)^2 - a^2 \beta^2    \right] }   }{a( \alpha^2 -\beta^2) - \alpha c^2} 
    \end{aligned} \\[2pt]
    \ \ \ \ \ \ }
\end{equation}
and we see that $\mu_{c}$ and $\mu_{sc}$ are independent of $\phi$, so it must be that $\phi_{c} = \phi$ and the phase becomes irrelevant for the conditional nonclassicality, hence for the whole (nonclassical) steering process with TMST states.

\end{document}